\documentclass[letterpaper, 10 pt, conference]{ieeeconf}

\IEEEoverridecommandlockouts  \usepackage[font=footnotesize]{caption}

\usepackage{graphicx}
\usepackage{amsmath,amssymb,amsthm}
\usepackage{mathrsfs}
\usepackage{cite}
\usepackage{url}
\usepackage{float}

\newtheorem{theorem}{Theorem}
\newtheorem{proposition}{Proposition}

\newcommand{\mca}{\mathcal{A}}
\newcommand{\mcd}{\mathcal{D}}

\newcommand{\ea}{\mathrm{ea}}
\newcommand{\ei}{\mathrm{ei}}
\newcommand{\ep}{\mathrm{ep}}

\title{Information Structure of Defection Decisions in a Colonel Blotto Model of Deterrence}

\author{Tristan Mott, David Grimsman, Keith Paarporn
\thanks{An abridged version of this work has been accepted for presentation at the 65th IEEE Conference on Decision and Control (CDC 2026). This work has been submitted to the IEEE for possible publication. Copyright may be transferred without notice, after which this version may no longer be accessible. T. Mott and D. Grimsman are with the Department of Computer Science at Brigham Young University. K. Paarporn is with the Department of Computer Science at the University of Colorado, Colorado Springs. This work is partially supported by NSF grant \#2346791 and \#2541011. Contact: \texttt{tristanmott15@gmail.com, grimsman@cs.byu.edu, kpaarpor@uccs.edu}}}

\begin{document}

\maketitle

\begin{abstract}
In strategic interactions, deterrence is an outcome where one party incentivizes the other party to not participate. This concept can be applied to Colonel Blotto games between two generals where troops can be deterred from following orders by the actions of the opposing general. In this setting, deterrence depends not only on how generals allocate troops, but also on the information available to troops when they decide whether to follow orders or defect. In this work, we study a two-battlefield Colonel Blotto model of deterrence under multiple information structures, ranging from decisions made with only aggregate information to decisions made after local battlefield conditions are observed. 

We show how different information structures alter the structure of the induced game, affect the existence of equilibria, and substantially change expected utilities. Our findings also prove that the ratio of troops available to each general combined with a capture probability threshold is enough to inform generals about what information they want their troops to have access to.
\end{abstract}

\section{Introduction}

Colonel Blotto games model strategic interactions in which players allocate limited resources across multiple contests with the goal of winning as many contests as possible \cite{gross1950continuous,roberson2006colonel}. Since their introduction more than a century ago \cite{borel1921theorie}, they have generated a broad literature \cite{gross1950continuous,roberson2006colonel}. Part of their appeal is that their simple allocation rules give rise to complex strategy spaces and intractable equilibria. Many variants remain unsolved, and researchers often approach them through continuous approximations, General Lotto relaxations, structural generalizations, and computational methods \cite{hart2008discrete,kovenock2021generalizations,ahmadinejad2019duels}.

These games are often used to study allocation problems in settings such as military planning, security, and related strategic competitions \cite{pita2009using}. In these settings, the allocated resources may represent military units or other human actors, and the contests may represent targets or battlefields \cite{shubik1981systems}. Human actors do not always follow orders, introducing a layer of strategic complexity that is not captured by the standard Blotto model. In such settings, a realistic strategic model should also account for deterrence \cite{schelling1960strategy,powell1990nuclear,original_blotto_deterrence}.

We define deterrence as the strategic behavior of \emph{attacking} troops who may choose to defect instead of following orders, especially when following orders is associated with some perceived risk, such as the possibility of being captured by the defending troops. Deterrence in security settings modeled by Colonel Blotto games is important for multiple reasons. First, understanding the deterrence dynamics can help the defender design strategies that exploit the possibility that attacking troops may defect, thereby increasing expected utility. Second, inducing attacking troops to defect can reduce the number of direct interactions between attackers and defenders, which in many settings can save lives. Thus, understanding how to design strategies that induce deterrence can be of critical importance in real-world applications.

Previous work \cite{original_blotto_deterrence} introduced a model of deterrence in a two-battlefield Colonel Blotto game, where attacking troops have the option to defect based on their perceived probability of being captured by defending troops. The defending and attacking generals then choose their high-level strategies in a way that anticipates the defection decisions of the attacking troops. That paper assumed attacking troops observe the realized pure allocations before deciding whether to defect, but do not get to observe their individual battlefield assignments.

This paper extends that work by considering how the information available to attacking troops affects the deterrence dynamics and subsequent equilibria. Other research has addressed incomplete, asymmetric, strategically revealed, or dynamically evolving information in the context of Blotto and Lotto games \cite{kovenock2011multidim,paarporn2024intentions,paarporn2025incomplete,anbarci2023dynamicblotto}. These works change what the high-level players, or generals, know about resources, intentions, valuations, or prior moves. By contrast, rather than studying the information available to the generals, we study the information available to the attacking troops. We then examine how that information affects deterrence and the induced game between the generals. We assume the attacking troops have a common capture threshold, and that they will defect if their perceived probability of capture exceeds this threshold. The perceived capture probability depends on the information available, which we model across three different levels: \emph{ex-ante}, \emph{ex-interim} (same as in the original paper \cite{original_blotto_deterrence}), and \emph{ex-post}.

Our main contributions are as follows.
\begin{itemize}
    \item We formulate the ex-ante deterrence model and identify the threshold quantities $\Gamma$ and $\Lambda$, between which the game can fail to admit a mixed-strategy equilibrium.
    \item We formulate the ex-post deterrence model as a finite zero-sum game, characterize the structure of its payoff matrix, and compare it directly with the ex-interim model.
    \item We show that comparing the attacking troops' capture threshold, $\theta$, with the ratio of defending to attacking troops, $N_d/N_a$, is enough for a general to determine how much information they want their troops to have access to.
\end{itemize}

The remainder of the paper is organized as follows. Section II presents the model and the different information structures. Section III studies the ex-ante model. Section IV studies the ex-post model. Section V presents numerical comparisons between these new models and the original ex-interim model. Section VI concludes.

\section{Model}

We define the baseline two-battlefield Blotto game, introduce troop deterrence, and derive the induced games under the ex-ante, ex-interim, and ex-post information structures.

\subsection{Baseline Colonel Blotto Game}

An attacking general $\mca$ has $N_a > 0$ troops, and a defending general $\mcd$ has $N_d > 0$ troops.
They decide how to allocate their troops over two battlefields, which we refer to as ``left'' and ``right''.
A pure strategy for the attacker is any integer $i \in \{0,1,\dots,N_a\}$,
meaning that $i$ troops are sent to the left battlefield and $N_a-i$ are sent to the right battlefield.
In a similar fashion, a pure strategy for the defender is any integer $j \in \{0,1,\dots,N_d\}$.

For the pair of strategies $(i,j)$, the attacker wins the left battlefield if and only if $i>j$, and wins the right battlefield if and only if $N_a-i > N_d-j$.
The payoff to the defender is given by

\begin{equation}
    B_{i,j} := 1-\mathbf{1}\{i>j\}-\mathbf{1}\{N_a-i > N_d-j\}
\end{equation}
where $\mathbf{1}\{\cdot\}$ denotes the indicator function.
In words, $B_{i,j}=1$ if the attacker wins neither battlefield, $B_{i,j} = 0$ if each wins one, and $B_{i,j} = -1$ if the attacker wins both.
We denote $B$ to be the $(N_a+1)\times (N_d+1)$ matrix of entries $B_{i,j}$.

The generals are able to randomize over their strategies.
A mixed strategy for the attacker and defender is any $X \in \Delta^{N_a+1}$ and $Y\in \Delta^{N_d+1}$, respectively,
where $\Delta^k$ denotes the $k$-dimensional probability simplex. 

The utility function to the defender of the baseline game is
\begin{equation}
    U_\mcd(X,Y) := X^\top B Y,
    \label{eq:baseline_expected_utility}
\end{equation}
and the attacker's utility function is then given by $U_\mca(X,Y) := -U_\mcd(X,Y)$.
This formulation is known as a \emph{Colonel Blotto game}, which we refer to with $\text{CB}(N_a,N_d)$.

\subsection{Troop Deterrence Rule}

We extend the baseline game by allowing each attacking troop to decide whether to follow its assigned order or \emph{defect}.
In the baseline game, all troops follow their orders unconditionally.

We model each battlefield as being compromised, and therefore won by the attacker, if at least one attacking troop makes it through uncaptured.
The defender's role is therefore to capture attacking troops, while the attacker's role is to get troops through.
In this formulation, a defending troop is able to capture exactly one attacking troop.
Consequently, the probability of capture that an attacking troop faces, given that it was assigned to a battlefield contested by $a$ attacking troops (including itself) and $d$ defending troops,
is given by 
\begin{equation}\label{eq:pc}
    p_\text{c}(a,d) = \min\left\{\frac{d}{a},1\right\}.
\end{equation}
This quantity only applies to battlefields with assigned attacking troops, so $a\ge 1$.
The conditional decision to defect is based on an attacking troop's \emph{perceived} probability of getting captured by the defending troops.
The information available to a troop determines its perceived capture probability. We consider three information structures:
\begin{itemize}
    \item \emph{ex ante}: troops observe only the mixed strategies $X,Y$ of the two generals;
    \item \emph{ex interim}: troops observe the realized pure strategies $i,j$, but not their individual battlefield assignments;
    \item \emph{ex post}: troops observe the realized pure strategies as well as their individual battlefield assignments.
\end{itemize}

We will assume all attacking troops have a \emph{capture threshold} of $\theta \in [0,1)$.\footnote{In the previous paper \cite{original_blotto_deterrence}, this threshold was written as $\theta = m/p$, where $m$ is the reward from attacking and $p$ is the penalty from capture.} 
Throughout the paper, we assume a troop defects if its perceived capture probability is greater than $\theta$, and otherwise attacks; in particular, troops who are indifferent can be convinced by their general to attack, so deterrence occurs only when the perceived capture probability is strictly greater than $\theta$. 
We also assume that all attackers exposed to the same perceived capture probability make the same decision; as shown in the original paper \cite{original_blotto_deterrence}, this is always a Nash equilibrium, and it is the natural symmetric outcome in that setting.
The generals choose their mixed strategies knowing the information structure and anticipating the resulting defection decisions.

\subsection{Ex-Ante Decision Point}

For the ex-ante information structure, attacking troops observe only the generals' mixed strategies $X,Y$, so their perceived capture probability is the expected capture fraction over all pure allocations. Let $A \in [0,1]^{(N_a+1)\times(N_d+1)}$ be the capture fraction matrix, where $A_{i,j}$ is the fraction of attacking troops that would be captured under pure strategy pair $(i,j)$ if all attacking troops followed their orders.
Equivalently, $A_{i,j}$ is the expected capture probability of an attacking troop sampled uniformly at random:
\begin{equation}\label{eq:Aij}
A_{i,j}
:= \frac{1}{N_a}\left(\min\{i,j\}+\min\{N_a-i,N_d-j\}\right).
\end{equation}

If $I\sim X$ and $J\sim Y$ are the independently sampled pure strategies, then the common perceived capture probability is
\begin{equation}
    p_{\ea}(X,Y) := \mathbb{E}\!\left[A_{I,J}\right] = X^\top A Y.
\end{equation}

Thus, the defender's utility is
\begin{equation}\label{eq:ex_ante_utility}
    U_{\mcd,\ea}(X,Y)
    :=
    \begin{cases}
        X^\top B Y, & X^\top A Y\le \theta,\\[4pt]
        1, & X^\top A Y>\theta,
    \end{cases}
\end{equation}
with attacker utility $-U_{\mcd,\ea}(X,Y)$. Denote this game by $\text{CB}_\ea(N_a,N_d,\theta)$.

\subsection{Ex-Interim Decision Point}

In the ex-interim information structure \cite{original_blotto_deterrence}, the attacking troops observe the sampled pure strategies $(i,j)$, but not their battlefield assignments. Each troop therefore averages over the two possible battlefield assignments induced by $(i,j)$, giving perceived capture probability
\begin{equation}
    p_{\ei}(i,j) := A_{i,j}.
\end{equation}
Equivalently, $A_{i,j}$ is the perceived capture probability of an attacking troop that observes $(i,j)$ but not its battlefield assignment.
We can then formulate a payoff matrix $M_\ei \in \mathbb{R}^{(N_a+1)\times(N_d+1)}$ with entries \cite{original_blotto_deterrence}
\begin{equation}\label{eq:ex_interim_matrix}
\begin{aligned}
M_{\ei,i,j}
&=
1
-
\mathbf{1}\!\left\{A_{i,j}\le \theta \ \text{and}\ i>j\right\} \\
&\quad
-
\mathbf{1}\!\left\{A_{i,j}\le \theta \ \text{and}\ N_a-i>N_d-j\right\}.
\end{aligned}
\end{equation}
Hence, the defender's utility is
\begin{equation}\label{eq:ex_interim_utility}
    U_{\mcd,\ei}(X,Y) := X^\top M_\ei Y,
\end{equation}
with attacker utility $-U_{\mcd,\ei}(X,Y)$. Denote this game by $\text{CB}_\ei(N_a,N_d,\theta)$.

\subsection{Ex-Post Decision Point}

In the ex-post information structure, each attacking troop observes the sampled pure strategies $(i,j)$ as well as the battlefield to which it is assigned.
Unlike the previous two information structures, the attacking troops can form different perceived capture probabilities and thus may make different decisions.
In particular, those sent to the left battlefield will perceive it as $p_\text{c}(i,j)$, while those sent to the right battlefield will perceive it as $p_\text{c}(N_a-i,N_d-j)$.
Because $\theta<1$, attackers at a battlefield choose to attack only when at least one assigned attacker would remain uncaptured. Thus, the attacking troops win exactly those battlefields where an assigned troop attacks and remains uncaptured.

From this, we can then formulate a payoff matrix $M_\ep \in \mathbb{R}^{(N_a+1)\times(N_d+1)}$ with entries
\begin{equation}\label{eq:ex_post_matrix}
\begin{aligned}
    M_{\ep,i,j}
    &=
    1
    -
    \mathbf{1}\!\left\{i>0 \ \text{and}\ j \le \theta i \right\} \\
    &\quad
    -
    \mathbf{1}\!\left\{N_a-i>0 \ \text{and}\ j \ge N_d-\theta(N_a-i) \right\}.
\end{aligned}
\end{equation}
Hence, the defender's utility is
\begin{equation}\label{eq:ex_post_utility}
    U_{\mcd,\ep}(X,Y) := X^\top M_\ep Y,
\end{equation}
with attacker utility $-U_{\mcd,\ep}(X,Y)$. Denote this game by $\text{CB}_\ep(N_a,N_d,\theta)$.

\subsection{Equilibrium Concept}

In all formulations $\text{CB}_\ea(N_a,N_d,\theta)$, $\text{CB}_\ei(N_a,N_d,\theta)$, and $\text{CB}_\ep(N_a,N_d,\theta)$, we are interested in characterizing the equilibrium payoff to the defender.
For $\ell \in \{\ea,\ei,\ep\}$, an equilibrium is a pair of mixed strategies $(X^*,Y^*)\in \Delta^{N_a+1}\times \Delta^{N_d+1}$ for which 
\begin{equation}
    \begin{aligned}
        U_{\mcd,\ell}(X^*,Y) \leq U_{\mcd,\ell}(X^*,Y^*) \leq U_{\mcd,\ell}(X,Y^*)
    \end{aligned}
\end{equation}
for all $X \in \Delta^{N_a+1}$ and $Y\in \Delta^{N_d+1}$.

\subsection{Simple Example}

Consider an example where both generals have two troops. 
In the baseline game $\text{CB}(2, 2)$, the equilibrium strategies are given by $X^*=Y^*=[1/3,1/3,1/3]^\top$.
Suppose they utilize these strategies, and suppose the pure strategy samples are $(i, j) = (1, 2)$.

\begin{figure}[H]
    \centering
    \includegraphics[width=\columnwidth]{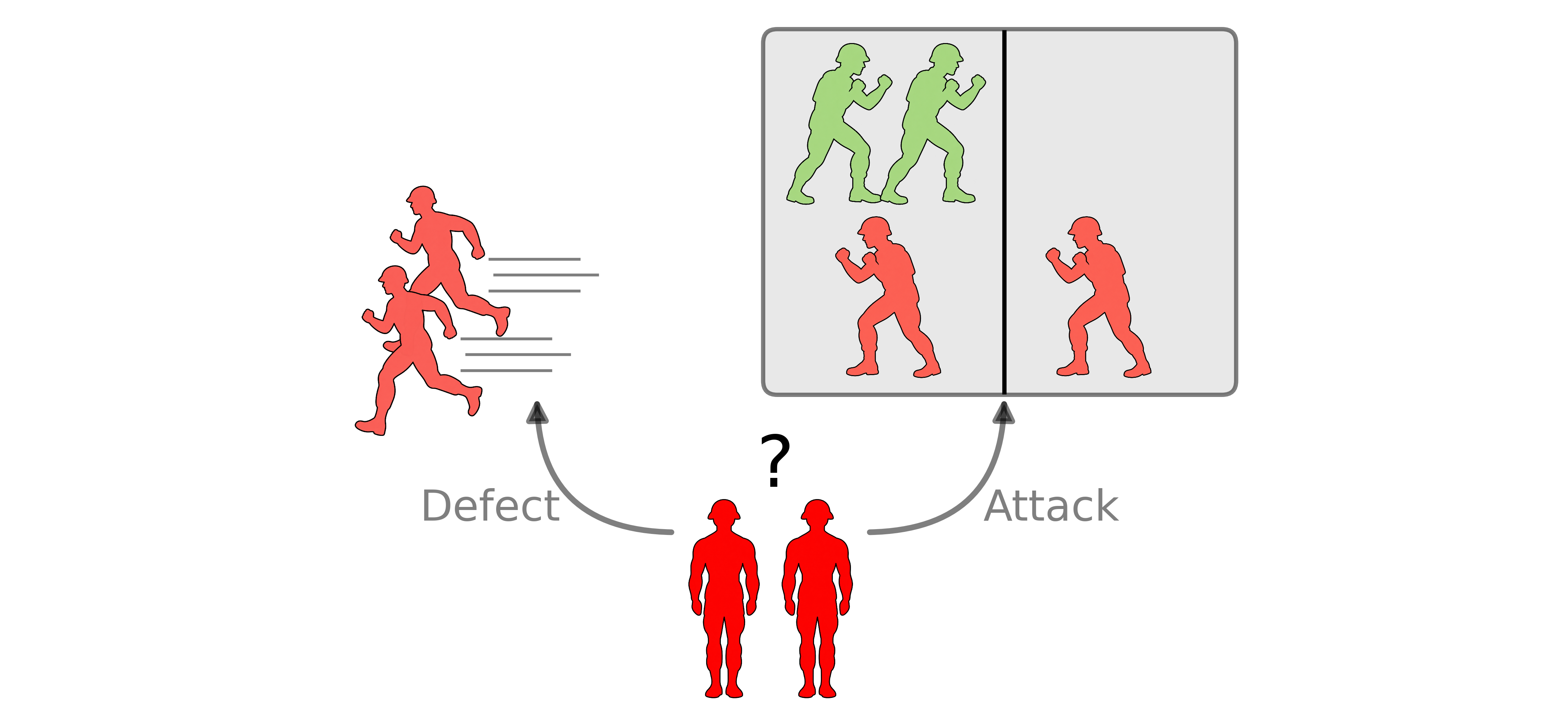}
    \caption{Example allocations for $N_a=2$, $N_d=2$.}
    \label{fig:allocation_example}
\end{figure}

In this example, the ex-ante perceived capture probability is $p_\ea(X,Y)=5/9$,
the ex-interim perceived capture probability is $p_\ei(1,2)=A_{1,2}=1/2$,
and the ex-post perceived capture probability is $p_\text{c}(1,2)=1$ for the troop on the left,
and $p_\text{c}(1,0)=0$ for the troop on the right.
Depending on the value of $\theta$, the information level may therefore have a significant impact on the outcome of the game.

\section{Ex-Ante Decision Point}

We now study the ex-ante model, in which attacking troops decide whether or not to defect after observing only the generals' mixed strategies. The defender's payoff is defined by the piecewise function \eqref{eq:ex_ante_utility}; if the perceived capture probability is above the threshold, all attacking troops defect, and otherwise the baseline Colonel Blotto game is played. Because the payoff depends jointly on the baseline utility matrix, the capture fraction matrix, and the deterrence threshold, it is no longer bilinear in the mixed strategies.

Let $\mathcal{E}_B$ denote the set of attacker equilibrium strategies in the baseline game with payoff matrix $B$.

\begin{theorem}
Define
\begin{equation}
\Gamma := \max_Y \min_X X^\top A Y
\label{eq:Gamma_def}
\end{equation}
and
\begin{equation}
\Lambda := \min_{X_B\in\mathcal{E}_B}\max_Y X_B^\top A Y.
\label{eq:Lambda_def}
\end{equation}
Then:
\begin{enumerate}
\item If $\theta < \Gamma$, then there exists a defender strategy $Y^*$ such that $(X,Y^*)$ is a Nash equilibrium of the ex-ante game for every attacker strategy $X$, and deterrence occurs in equilibrium.
\item If $\theta \ge \Lambda$, then there exists a baseline equilibrium attacker strategy $X_B\in\mathcal{E}_B$ such that $(X_B,Y_B)$ is a Nash equilibrium of the ex-ante game for any defender best response $Y_B$ to $X_B$ in the baseline game.
\item If $\Gamma \le \theta < \Lambda$, then there exist game instances for which no mixed strategy Nash equilibrium exists.
\end{enumerate}
\end{theorem}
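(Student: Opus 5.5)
The three parts will be handled separately. The only facts used are that $U_{\mcd,\ea}\le 1$ everywhere (because $B_{i,j}\le 1$) and that on the region $\{X^\top A Y\le\theta\}$ the payoff equals the bilinear baseline payoff.

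\emph{Part 1.} Let $Y^*$ attain the maximum in \eqref{eq:Gamma_def}. The set $\Delta^{N_d+1}$ is compact and $Y\mapsto\min_X X^\top A Y$ is continuous, so such a maximizer exists. For every $X$ we then have $X^\top A Y^*\ge \Gamma>\theta$, hence $U_{\mcd,\ea}(X,Y^*)=1$. The first inequality of the equilibrium condition, $U_{\mcd,\ea}(X,Y)\le 1=U_{\mcd,\ea}(X,Y^*)$, holds because the payoff never exceeds $1$. The second, $U_{\mcd,\ea}(X,Y^*)=1\le U_{\mcd,\ea}(X',Y^*)=1$, holds trivially. Therefore $(X,Y^*)$ is an equilibrium for every $X$, and all troops defect, so deterrence occurs.

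\emph{Part 2.} Let $X_B\in\mathcal{E}_B$ attain the minimum in \eqref{eq:Lambda_def}; it exists because $\mathcal{E}_B$ is a compact polytope. Then $X_B^\top A Y\le\Lambda\le\theta$ for every $Y$. Against $X_B$ the defender therefore can never trigger deterrence, and $U_{\mcd,\ea}(X_B,Y)=X_B^\top BY$. Any baseline best response $Y_B$ to $X_B$ is thus a best response in the ex-ante game, and $X_B^\top B Y_B=v_B$, where $v_B$ is the baseline value, since $X_B$ is a maximin strategy for the attacker. It remains to show the attacker cannot deviate profitably, i.e. $U_{\mcd,\ea}(X,Y_B)\ge v_B$ for all $X$. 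If $X^\top AY_B>\theta$ the payoff is $1\ge v_B$. Otherwise the payoff is $X^\top BY_B$.

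This second case is the main obstacle. A generic baseline best response $Y_B$ need not be a baseline equilibrium strategy, so $X^\top BY_B\ge v_B$ can fail. My first attempt is to restrict $Y_B$ to best responses that are also baseline equilibrium defender strategies, which gives $X^\top B Y_B\ge v_B$ for all $X$ by the minimax theorem. For the statement as written, with any best response, I would instead use the structure of $\mathrm{CB}(N_a,N_d)$. The candidate argument is that $B_{i,j}$ takes values in $\{-1,0,1\}$, and the two-battlefield equilibria are explicit. From this I would aim to show that the best-response support of $X_B$ supports a defender equilibrium, or else choose $X_B$ among minimizers so that every best response is an equilibrium strategy. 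If that fails, the statement should be read with $Y_B$ ranging over the defender's baseline equilibrium strategies, and I would state that reading explicitly.

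\emph{Part 3.} This only requires exhibiting one instance, for example $N_a=N_d=2$ or a similarly small case, together with a $\theta\in[\Gamma,\Lambda)$. First compute $\Gamma$ and $\Lambda$ by small linear programs. Then argue by contradiction. Suppose $(X^*,Y^*)$ is an equilibrium. If $X^{*\top}AY^*>\theta$, the payoff is $1$; the attacker can deviate to a maximizer $\hat X$ of $\min_Y$-type protection, and $\theta\ge\Gamma$ means the defender cannot force deterrence against every $X$, so a suitable deviation gives the attacker a payoff $<1$. If $X^{*\top}AY^*\le\theta$, the payoff is bilinear. Because $\theta<\Lambda$, no baseline equilibrium $X$ keeps $X^\top AY\le\theta$ for all $Y$. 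Hence either $X^*$ is not a baseline maximin strategy, so the defender can push the payoff above $v_B$ within the safe region or trigger deterrence, or the defender can find $Y$ with $X^{*\top}AY>\theta$ and obtain payoff $1$. In both cases the defender has a profitable deviation. The concrete numbers are checked directly, and the discontinuity at the threshold is what blocks any fixed point.
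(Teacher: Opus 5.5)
Part 1 matches the paper. On Part 2 you are more careful than the paper, whose proof does not check attacker deviations against $Y_B$. Resolve your hedge in favor of your fallback, because the literal ``any best response'' version is false. For $N_a=N_d=2$ one has $B=I$, $\mathcal{E}_B=\{(1/3,1/3,1/3)\}$ and $\Lambda=2/3$, and every $Y$ is a baseline best response to $X_B$. Take $\theta\ge 2/3$ and $Y_B=e_0$: the attacker's deviation to $i=1$ has capture fraction $A_{1,0}=1/2\le\theta$ and payoff $B_{1,0}=0<1/3=U_{\mcd,\ea}(X_B,e_0)$. Since $\mathcal{E}_B$ is a singleton here, neither choosing $X_B$ among minimizers nor a structural argument can rescue the literal claim. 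State Part 2 with $Y_B$ a defender baseline equilibrium strategy, and your proof is then complete.

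The genuine gap is in Part 3. In your second case, when $X^*\notin\mathcal{E}_B$, you conclude that the defender can push the payoff above $v_B$. But a deviation is profitable only if it beats the candidate payoff $v:=U_{\mcd,\ea}(X^*,Y^*)$, and in the ex-ante game nothing ties $v$ to $v_B$. In the $N_a=N_d=2$ instance ($v_B=1/3$, $\Gamma=1/2$, $\Lambda=2/3$), the attacker's security value $\min_X\max_Y U_{\mcd,\ea}$ equals $1-\theta>1/3$ on $(1/2,2/3)$. So any equilibrium would have to pay more than $v_B$, and your dichotomy yields no contradiction. Your first case likewise needs a check that the deterrence-avoiding deviation pays strictly less than $1$.

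The paper closes the gap by computing both security values explicitly for that instance, $V_A(\theta)=1-\theta$ and a piecewise $V_D(\theta)$. It shows $V_D<V_A$ on $(1/2,2/3)$ and handles $\theta=1/2$ by a separate best-response cycle. Your contradiction route also works there if you argue against $v$ rather than $v_B$. Write $X^*=(a,b,c)$ and $Y^*=(p,q,r)$.
\begin{enumerate}
\item The attacker can deviate to the endpoint allocation facing $\min\{p,r\}$. Its capture fraction is $\min\{p,r\}+q/2\le 1/2\le\theta$ and its payoff is at most $1/2$. Hence $v<1$, and deterrence is inactive at $(X^*,Y^*)$.
\item Every entry of $X^{*\top}A$ is then at most $\theta$, since otherwise the defender deviates to that column and gets $1$. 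The middle entry $(1+b)/2$ gives $b\le 2\theta-1<1/3<\max\{a,c\}$.
\item Since $U_{\mcd,\ea}\ge X^\top BY$, you get $ap+bq+cr=v\ge\max\{a,b,c\}$. A convex combination of $a,b,c$ reaches their maximum only if it puts no weight on $b$, so $q=0$.
\item Now the deviation to $i=1$ has capture fraction $(p+r)/2=1/2\le\theta$ and payoff $q=0<1/3\le v$, a contradiction.
\end{enumerate}
This covers every $\theta\in[1/2,2/3)$, including the endpoint, without any symmetry reduction.
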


In words, $\Gamma$ is the highest expected capture fraction the defender can guarantee against any attacker strategy, while $\Lambda$ is the highest expected capture fraction the defender can guarantee when the attacker is restricted to baseline equilibrium strategies. The defender can therefore force deterrence when $\theta<\Gamma$ and the attacker can safely use a baseline equilibrium when $\theta\ge\Lambda$. The intermediate regime may fail to admit equilibrium. Importantly, this intermediate region is exactly where deterrence is strategically relevant, because both generals have to take into account both the baseline game and the perceived capture probabilities. However, the defender cannot robustly enforce deterrence while the attacker cannot safely ignore it, which is what causes equilibria to fail to exist in some parameterizations.

\begin{proof}
\noindent\emph{Proof of 1).}
By the definition of $\Gamma$, there exists a defender strategy $Y^*$ satisfying
\[
\min_X X^\top A Y^*=\Gamma.
\]
If $\theta<\Gamma$, then $X^\top A Y^*>\theta$ for every attacker strategy $X$. Hence all attacking troops defect, so the defender's utility is $U_{\mcd,\ea}(X,Y^*)=1$ for all $X$. Since $1$ is the maximum possible defender payoff, $Y^*$ is a best response to every attacker strategy, and therefore $(X,Y^*)$ is a Nash equilibrium for any $X$.

\medskip
\noindent\emph{Proof of 2).}
By the definition of $\Lambda$, there exists $X_B\in\mathcal{E}_B$ such that
\[
\max_Y X_B^\top A Y=\Lambda\le \theta.
\]
Therefore $X_B^\top A Y\le \theta$ for every defender strategy $Y$, so deterrence never occurs when the attacker plays $X_B$. Along this strategy, the ex-ante payoff reduces to
\[
U_{\mcd,\ea}(X_B,Y)=X_B^\top B Y.
\]
Thus the defender may as well play a best response in the baseline game, and the attacker can safely use the baseline equilibrium strategy $X_B$ without worrying about deterrence. Hence $(X_B,Y_B)$ remains a Nash equilibrium of the ex-ante game.

\medskip
\noindent\emph{Proof of 3).}
Consider the symmetric case $N_a=N_d=2$, for which
\begin{equation}
A=
\begin{bmatrix}
1 & 1/2 & 0 \\
1/2 & 1 & 1/2 \\
0 & 1/2 & 1
\end{bmatrix},
\qquad
B=
\begin{bmatrix}
1 & 0 & 0 \\
0 & 1 & 0 \\
0 & 0 & 1
\end{bmatrix}.
\label{eq:example_matrices}
\end{equation}
Then $\Gamma=1/2$ and $\Lambda=2/3$, so the intermediate region is
\[
\frac{1}{2}\le \theta < \frac{2}{3}.
\]
We show that for every $\theta\in[1/2,2/3)$, the ex-ante game has no mixed strategy Nash equilibrium.
Because this $N_a=N_d=2$ instance is invariant under exchanging the two battlefields, any nonsymmetric mixed strategy from either general would favor one battlefield, and the opposing general could exploit this by shifting probability toward or away from the favored side. Thus any equilibrium candidate must assign equal mass to the two endpoint allocations, so we consider strategies of the form
\begin{equation}
X=
\begin{bmatrix}
(1-x)/2\\
x\\
(1-x)/2
\end{bmatrix},
\qquad
Y=
\begin{bmatrix}
(1-y)/2\\
y\\
(1-y)/2
\end{bmatrix},
\label{eq:symmetric_parameterization}
\end{equation}
with $x,y\in[0,1]$.

Under this parameterization, the expected capture fraction is
\[
X^\top A Y=\frac{1+xy}{2},
\]
and the baseline utility is
\begin{equation}
\begin{aligned}
X^\top B Y
&=
y\left(\frac{3x-1}{2}\right)+\frac{1-x}{2} \\
&=
x\left(\frac{3y-1}{2}\right)+\frac{1-y}{2}.
\end{aligned}
\label{eq:example_baseline}
\end{equation}

At the endpoint $\theta=1/2$, any equilibrium candidate must have $x=0$; otherwise the defender can choose any $y>0$ and trigger deterrence. Given $x=0$, the defender uniquely maximizes the baseline payoff by choosing $y=0$. But against $y=0$, deterrence is inactive for every $x$, and the attacker uniquely minimizes the defender payoff by choosing $x=1$, a contradiction. Hence no equilibrium exists at $\theta=1/2$.

It remains to consider $\theta\in(1/2,2/3)$, where we compare the attacker and defender security values. Since every equilibrium payoff in a zero-sum game must equal both security values, strict inequality between them will rule out the existence of equilibrium.

\paragraph*{Attacker security value.}
Define
\[
V_A(\theta):=\min_X \max_Y U_{\mcd,\ea}(X,Y).
\]
Fix $x$. Since the expected capture fraction is increasing in $y$, the defender maximizes capture by choosing $y=1$. To avoid deterrence, the attacker security policy must therefore satisfy
\[
\frac{1+x}{2}\le \theta,
\qquad\text{equivalently}\qquad
x\le 2\theta-1.
\]
Because $\theta<2/3$, any such $x$ satisfies $x<1/3$.

For $x<1/3$, the coefficient of $y$ in \eqref{eq:example_baseline} is negative, so conditional on avoiding deterrence the defender's best response is $y=0$. The resulting payoff is
\[
U_{\mcd,\ea}(X,Y)=\frac{1-x}{2},
\]
which is decreasing in $x$. Hence the attacker minimizes the defender's payoff by taking the largest feasible value,
\[
x^*=2\theta-1.
\]
Substituting yields
\begin{equation}
V_A(\theta)=1-\theta.
\label{eq:attacker_security_value}
\end{equation}

\paragraph*{Defender security value.}
Define
\[
V_D(\theta):=\max_Y \min_X U_{\mcd,\ea}(X,Y).
\]
Fix $y$. To avoid deterrence, the attacker must satisfy
\[
\frac{1+xy}{2}\le \theta,
\qquad\text{equivalently}\qquad
xy\le 2\theta-1.
\]
We consider three cases.

\emph{Case 1: $y\ge 1/3$.}
Then $x=0$ is feasible, and the coefficient of $x$ in \eqref{eq:example_baseline} is nonnegative. Thus $x=0$ minimizes the baseline payoff, giving
\[
U_{\mcd,\ea}(X,Y)=\frac{1-y}{2}.
\]
This is decreasing in $y$, so the best choice within this case is $y=1/3$, which yields $V_D(\theta)=1/3$.

\emph{Case 2: $y\le 2\theta-1$.}
Then the deterrence constraint is satisfied for every $x\in[0,1]$. Since $y<1/3$ throughout this case, the coefficient of $x$ in \eqref{eq:example_baseline} is negative, so the attacker minimizes by choosing $x=1$. The resulting payoff is
\[
U_{\mcd,\ea}(X,Y)=y.
\]
The defender therefore chooses the largest feasible value, $y=2\theta-1$, giving $V_D(\theta)=2\theta-1$. Because $\theta<2/3$, this value is strictly less than $1/3$, so Case 2 is dominated by Case 1.

\emph{Case 3: $2\theta-1<y<1/3$.}
Again the coefficient of $x$ in \eqref{eq:example_baseline} is negative, so the attacker minimizes by choosing the largest feasible $x$, namely
\[
x=\frac{2\theta-1}{y}.
\]
Substituting into \eqref{eq:example_baseline} gives
\[
U_{\mcd,\ea}(X,Y)
=
\frac{6\theta-2-y}{2}
+
\frac{1-2\theta}{2y}.
\]
Differentiating with respect to $y$ shows that this expression is maximized at
\[
y=\sqrt{2\theta-1},
\]
with corresponding payoff
\[
V_D(\theta)=3\theta-1-\sqrt{2\theta-1}.
\]
Setting this equal to $1/3$ shows that Case 3 is preferable to Case 1 when $\theta<5/9$.

Combining the three cases,
\begin{equation}
V_D(\theta)=
\begin{cases}
3\theta-1-\sqrt{2\theta-1}, & \text{if } \dfrac{1}{2} < \theta < \dfrac{5}{9}, \\[8pt]
\dfrac{1}{3}, & \text{if } \dfrac{5}{9}\le \theta < \dfrac{2}{3}.
\end{cases}
\label{eq:defender_security_piecewise}
\end{equation}

Finally, compare the two security values. For $1/2<\theta<5/9$,
\[
V_A(\theta)-V_D(\theta)=2-4\theta+\sqrt{2\theta-1}>0,
\]
and for $5/9\le \theta<2/3$,
\[
V_A(\theta)-V_D(\theta)=\frac{2}{3}-\theta>0.
\]
Hence
\[
V_D(\theta)<V_A(\theta)
\]
for every $\theta\in(1/2,2/3)$. Therefore no mixed strategy Nash equilibrium exists in this region.
\end{proof}

As shown by the proof, equilibria can fail to exist when capture incentives and baseline utility become decoupled. In the intermediate region, $\theta$ is too high for the defender to guarantee deterrence but too low for the attacker to safely minimize defender baseline utility. Both generals must therefore balance two objectives: optimizing the baseline payoff and controlling the expected capture fraction. Because the deterrence decision depends on the mixed strategies themselves, these objectives need not admit equilibrium mixtures that balance both incentives. In a repeated or adaptive setting, play may settle on security policies in which at least one general is not best responding, or it may fail to settle as the generals continually adapt to one another. The dynamics would depend heavily on how attacking troops form or update beliefs about the generals' mixtures.

\section{Ex-Post Decision Point}

We now study the ex-post model already defined in Section~II. In particular, the payoff matrix $M_\ep$ is given by \eqref{eq:ex_post_matrix} and the defender's utility by \eqref{eq:ex_post_utility}. The goal of this section is to analyze the structure of $M_\ep$ and the resulting equilibrium properties.

\subsection{Matrix Structure}

To make the structure of $M_\ep$ explicit, define
\[
\begin{aligned}
J_1(i)&=
\begin{cases}
-1, & i=0,\\
\lfloor \theta i \rfloor, & 1\le i\le N_a,
\end{cases}
\\[4pt]
J_2(i)&=
\begin{cases}
\lceil N_d-\theta(N_a-i)\rceil, & 0\le i\le N_a-1,\\
N_d+1, & i=N_a.
\end{cases}
\end{aligned}
\]
Then the ex-post decision rule from \eqref{eq:ex_post_matrix} implies that, in row $i$, the attacker wins the left battlefield for all columns $j\le J_1(i)$ and the right battlefield for all columns $j\ge J_2(i)$.

Accordingly, row $i$ of $M_\ep$ has entry $-1$ exactly on columns satisfying $J_2(i)\le j\le J_1(i)$, entry $1$ exactly on columns satisfying $J_1(i)<j<J_2(i)$, and entry $0$ elsewhere.

A $1$-entry requires an integer $j$ satisfying
\[
\theta i<j<N_d-\theta(N_a-i),
\]
which is possible only if $\theta<N_d/N_a$. Similarly, a $-1$-entry requires an integer $j$ satisfying
\[
N_d-\theta(N_a-i)\le j\le \theta i,
\]
which is possible only if $\theta\ge N_d/N_a$.

This is the same threshold identified in the ex-interim model in the previous paper \cite{original_blotto_deterrence}, which shows it is never possible for the defender to win both battlefields and for the attacker to win both battlefields in the same game parameterization.

However, the matrix geometry is different from that of the ex-interim model, whose payoff matrix is Toeplitz with a diagonal band of constant width. In the ex-post model, the cutoffs $J_1(i)$ and $J_2(i)$ vary with the row, so the location and width of the nonzero region change from row to row, even though the sign type is globally determined by whether $\theta$ lies below, above, or exactly at $N_d/N_a$.

For example, when $N_a=N_d=3$ and $\theta=0.5<N_d/N_a$, the ex-interim and ex-post models yield
\[
\begin{bmatrix}
1 & 1 & 0 & 0 \\
1 & 1 & 1 & 0 \\
0 & 1 & 1 & 1 \\
0 & 0 & 1 & 1
\end{bmatrix}
\qquad\text{and}\qquad
\begin{bmatrix}
1 & 1 & 0 & 0 \\
0 & 1 & 0 & 0 \\
0 & 0 & 1 & 0 \\
0 & 0 & 1 & 1
\end{bmatrix},
\]
respectively.

\subsection{Comparison with Ex-Interim}

The next result shows that this change in matrix geometry has a consistent directional effect relative to the ex-interim model.

\begin{proposition}\label{prop}
The following hold:
\begin{enumerate}
    \item If $\theta < N_d/N_a$, then $M_\ep \le M_\ei$ entrywise.
    \item If $\theta \ge N_d/N_a$, then $M_\ep \ge M_\ei$ entrywise.
\end{enumerate}
\end{proposition}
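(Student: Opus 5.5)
The plan is a direct entrywise comparison, split by the sign of $\theta - N_d/N_a$. I write each matrix as $M = 1 - L - R$, where $L,R\in\{0,1\}$ indicate that the attacker wins the left and right battlefield respectively. In the ex-interim model these are $L_\ei=\mathbf{1}\{A_{i,j}\le\theta,\ i>j\}$ and $R_\ei=\mathbf{1}\{A_{i,j}\le\theta,\ N_a-i>N_d-j\}$, and $L_\ep,R_\ep$ are read off \eqref{eq:ex_post_matrix}.

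In each regime I first pin down which values each matrix can take; after that, only one implication between the indicators needs checking. The matrix-structure discussion above already shows that $M_\ep\in\{0,1\}$ when $\theta<N_d/N_a$ and $M_\ep\in\{-1,0\}$ when $\theta\ge N_d/N_a$. For $M_\ei$ I will reprove the analogous facts directly from \eqref{eq:Aij}, rather than relying only on the citation to \cite{original_blotto_deterrence}.

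\emph{Case $\theta<N_d/N_a$.} First I show $M_\ei$ has no $-1$ entries. If $i>j$ and $N_a-i>N_d-j$, then both minima in \eqref{eq:Aij} are attained by the defender terms, so $A_{i,j}=N_d/N_a>\theta$, and the entry is $1$ rather than $-1$. Hence $M_\ei\in\{0,1\}$ and $M_\ep\in\{0,1\}$, so it suffices to show that $M_\ei=0$ forces $M_\ep=0$. Equivalently, if the ex-interim attacker wins at least one battlefield, then the ex-post attacker wins at least one.

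I argue by contradiction. Suppose $A_{i,j}\le\theta$ and $i>j$ (the right-battlefield case is symmetric), but the ex-post attacker wins neither battlefield. Then $i>0$ gives $j>\theta i$. On the right, either $N_a-i=0$, or $N_d-j>\theta(N_a-i)$.
\begin{itemize}
\item In the second subcase, $\theta<1$ gives $N_a-i>\theta(N_a-i)$, so $\min\{N_a-i,N_d-j\}>\theta(N_a-i)$.
\item In the first subcase that minimum is $0=\theta(N_a-i)$.
\end{itemize}
Since $\min\{i,j\}=j$, in either subcase $N_aA_{i,j}=j+\min\{N_a-i,N_d-j\}>\theta i+\theta(N_a-i)=\theta N_a$, contradicting $A_{i,j}\le\theta$.

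\emph{Case $\theta\ge N_d/N_a$.} Here I first show $M_\ei$ has no $+1$ entries. The bound $N_aA_{i,j}\le j+(N_d-j)=N_d\le\theta N_a$ shows that deterrence never occurs. An entry of $1$ would then require $i\le j$ and $N_a-i\le N_d-j$, which in turn forces $A_{i,j}=1>\theta$, a contradiction. So both matrices take values in $\{-1,0\}$, and it suffices to show that $M_\ep=-1$ implies $M_\ei=-1$.

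If $M_\ep=-1$, then $i>0$ and $j\le\theta i<i$, using $\theta<1$. Likewise $N_a-i>0$ and $N_d-j\le\theta(N_a-i)<N_a-i$. So $i>j$ and $N_a-i>N_d-j$, and together with $A_{i,j}\le\theta$ this gives $M_{\ei,i,j}=-1$.

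The only step needing care is the contradiction argument in the first case. There one must use the strict inequality coming from $\theta<1$ together with the boundary conventions $i=0$ and $i=N_a$ built into $J_1$ and $J_2$. I expect everything else to be bookkeeping once the value ranges of the two matrices are fixed.
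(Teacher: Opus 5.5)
Your proof is correct and takes essentially the paper's approach: you split on the sign of $\theta-N_d/N_a$, fix the value ranges of both matrices, and then check a single implication, where your inequality $j+\min\{N_a-i,N_d-j\}>\theta i+\theta(N_a-i)$ is the paper's observation that $A_{i,j}$ is the troop-weighted average of the two local capture probabilities, written in unnormalized form and argued by contrapositive. The differences are minor but in your favor: you verify the value ranges of $M_\ei$ directly rather than citing them, you handle the empty-battlefield rows $i=0$ and $i=N_a$ explicitly (where $p_\text{c}$ is undefined and the paper's averaging language is informal), and in the second case you use the uniform bound $A_{i,j}\le N_d/N_a\le\theta$ instead of the averaging step.
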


\begin{proof}
Fix a pure strategy pair $(i,j)$. By \eqref{eq:Aij},
\[
A_{i,j}
=
\frac{\min\{i,j\}}{N_a}
+
\frac{\min\{N_a-i,N_d-j\}}{N_a},
\]
which is the capture fraction, equivalently the troop-weighted average of the ex-post capture probabilities on the two battlefields.

Suppose first that $\theta < N_d/N_a$. If $M_{\ep,i,j}=1$, then the attacking troops defect on both battlefields in the ex-post model, so both local capture probabilities are strictly greater than $\theta$. Their weighted average is therefore also strictly greater than $\theta$, which implies $A_{i,j}>\theta$ and hence $M_{\ei,i,j}=1$. Since in this regime both matrices have only $0$ and $1$ entries, it follows that $M_\ep \le M_\ei$ entrywise.

Now suppose that $\theta \ge N_d/N_a$. If $M_{\ep,i,j}=-1$, then the attacking troops attack on both battlefields in the ex-post model, so both local capture probabilities are at most $\theta$. Their weighted average is therefore also at most $\theta$, so $A_{i,j}\le\theta$. Because $\theta<1$, attacking on both battlefields also implies $i>j$ and $N_a-i>N_d-j$, and therefore $M_{\ei,i,j}=-1$. Since in this regime both matrices have only $0$ and $-1$ entries, it follows that $M_\ep \ge M_\ei$ entrywise.
\end{proof}

Thus, below the threshold $\theta=N_d/N_a$, ex-post information can only remove defender-favorable $1$-entries, while above the threshold it can only remove attacker-favorable $-1$-entries. Equivalently, the sign of $\theta-N_d/N_a$ determines whether revealing more local information helps or hurts the defender, so this threshold can be interpreted as a simple rule that could be used by either general for information design. For example, the attacker may be able to control the level of information by exposing or withholding their plans, while the defender may be able to control the level of information by making it easier or harder for the attacking troops to observe local conditions and defect during later stages of an attack.

\subsection{Equilibrium Existence and Implications}

The ex-post model is represented by the finite payoff matrix $M_\ep$, so it is a standard zero-sum game. Hence a mixed strategy Nash equilibrium always exists, and the value of the game satisfies
\begin{equation}
v=\min_X \max_Y X^\top M_\ep Y
=\max_Y \min_X X^\top M_\ep Y.
\label{eq:later_minimax}
\end{equation}

Equilibrium strategies can be computed by linear programming. The attacker's problem is
\begin{align*}
\min_{X,v}\quad & v \\
\text{s.t.}\quad & X^\top M_\ep e_j \le v, \qquad j=0,1,\dots,N_d, \\
& \sum_{i=0}^{N_a} X_i = 1, \\
& X_i \ge 0, \qquad i=0,1,\dots,N_a,
\end{align*}
where $e_j$ denotes the $j$th pure strategy of the defender.

The defender's dual problem is
\begin{align*}
\max_{Y,w}\quad & w \\
\text{s.t.}\quad & e_i^\top M_\ep Y \ge w, \qquad i=0,1,\dots,N_a, \\
& \sum_{j=0}^{N_d} Y_j = 1, \\
& Y_j \ge 0, \qquad j=0,1,\dots,N_d,
\end{align*}
where $e_i$ denotes the $i$th pure strategy of the attacker.

By strong duality, the optimal objective values coincide, so $v=w$ equals the value of the game. Thus, once the matrix $M_\ep$ from \eqref{eq:ex_post_matrix} has been specified, equilibrium strategies for both generals can be obtained directly by solving this primal-dual pair.

Let $v_\ei$ and $v_\ep$ denote the defender equilibrium values of $\mathrm{CB}_\ei(N_a,N_d,\theta)$ and $\mathrm{CB}_\ep(N_a,N_d,\theta)$, respectively. Combining the comparison in the previous subsection with the minimax formulation gives
\[
v_\ep \le v_\ei \quad \text{if } \theta < N_d/N_a,
\]
and
\[
v_\ep \ge v_\ei \quad \text{if } \theta \ge N_d/N_a.
\]
The threshold $N_d/N_a$ has a simple interpretation: up to the natural cap at one, it is the largest aggregate fraction of attacking troops that can be captured. When $\theta<N_d/N_a$, the attacking troops are unwilling to tolerate this aggregate capture fraction, so ex-post information can only hurt the defender and help the attacker: some troops who otherwise would have defected may attack upon discovering that their individual assignment is favorable. When $\theta\ge N_d/N_a$, even the worst-case aggregate capture fraction is tolerable, so ex-post information can only help the defender and hurt the attacker: revealing local conditions may cause troops on unusually risky battlefields to defect when they otherwise would have attacked.

\section{Numeric Results}

This section explores examples where we solve for equilibrium and security utilities as a function of $\theta$ across different information structures. For the ex-ante decision point, we use the security values defined in the proof from section III for the two troop symmetric case, as well as the values for the three troop symmetric case which were computed in the same way. For the ex-interim and ex-post decision points, we constructed the payoff matrices $M_\ei$ and $M_\ep$ according to \eqref{eq:ex_interim_matrix} and \eqref{eq:ex_post_matrix}, respectively, and solved for equilibrium utilities using the linear programming formulation described in the previous section.

\subsection{Ex-Ante Decision Point}

The examples below compare the ex-ante model to the ex-interim model when both sides have two or three troops. We can visualize the theorem given in section III: for $\theta<\Gamma$, deterrence can be forced; for $\theta\ge\Lambda$, a baseline equilibrium survives; and in the intermediate region the security values are not guaranteed to coincide.

\noindent\textbf{Example 1: $N_a=2$, $N_d=2$.} Fig.~\ref{fig:early_2v2}.

\begin{figure}[H]
    \centering
    \includegraphics[width=\columnwidth]{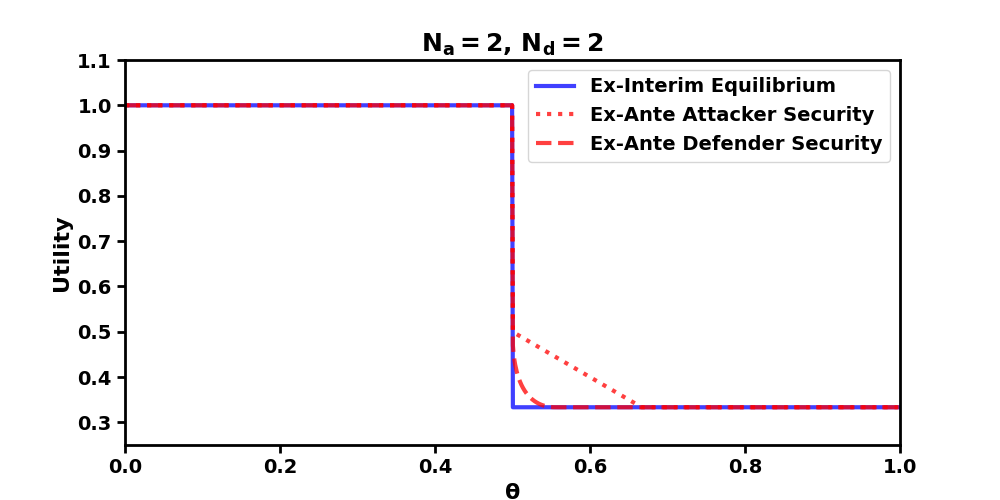}
    \caption{Ex-interim equilibrium utility and ex-ante security values for $N_a=2$, $N_d=2$.}
    \label{fig:early_2v2}
\end{figure}

The ex-ante and ex-interim models have the same value for $\theta<1/2$ (deterrence is guaranteed) and for $\theta\ge 2/3$ (baseline equilibrium is maintained). The difference appears on $1/2\le \theta < 2/3$. For the ex-interim model, this region has the same value as the baseline game. For the ex-ante model, the security values diverge. Thus, in this example, the ex-ante information structure favors the defender throughout the intermediate region.

\noindent\textbf{Example 2: $N_a=3$, $N_d=3$.} Fig.~\ref{fig:early_3v3}.

\begin{figure}[H]
    \centering
    \includegraphics[width=\columnwidth]{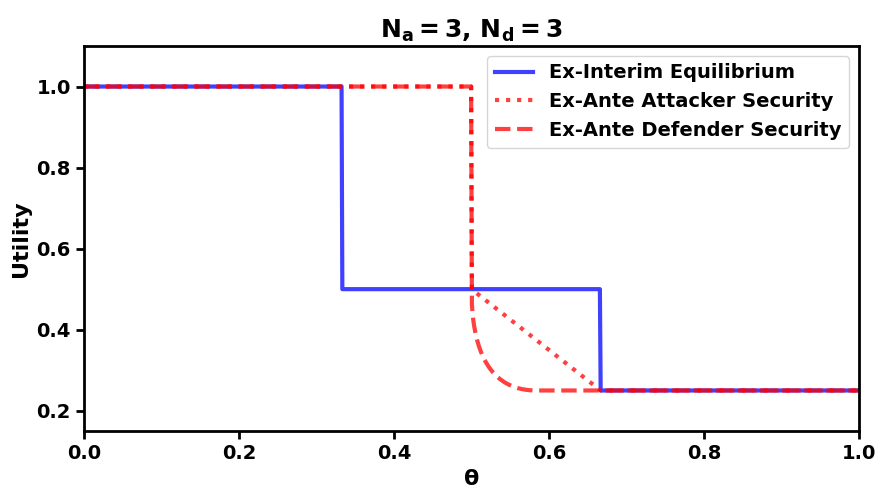}
    \caption{Ex-interim equilibrium utility and ex-ante security values for $N_a=3$, $N_d=3$.}
    \label{fig:early_3v3}
\end{figure}

The comparison is quite different when both sides have three troops instead of two. When $1/3\le \theta < 1/2$, both models have equilibria, but the ex-interim model no longer gets full deterrence, meaning the defender prefers the ex-ante model in this region of $\theta$. However, once $\theta$ crosses into the region from $1/2$ to $2/3$, the ex-ante model no longer has an equilibrium and both security values drop \emph{below} the ex-interim utility, meaning the \emph{attacker} now prefers the ex-ante model. This example shows that neither decision point uniformly dominates the other.

\subsection{Ex-Post Decision Point}

We now compare the ex-post model with the ex-interim model. These examples illustrate the result from Section~IV, showing that the threshold $\theta=N_d/N_a$ determines which information structure is preferable for each player.

\noindent\textbf{Example 3: $N_a=4$, $N_d=2$.} Fig.~\ref{fig:later_4v2}.

\begin{figure}[H]
    \centering
    \includegraphics[width=\columnwidth]{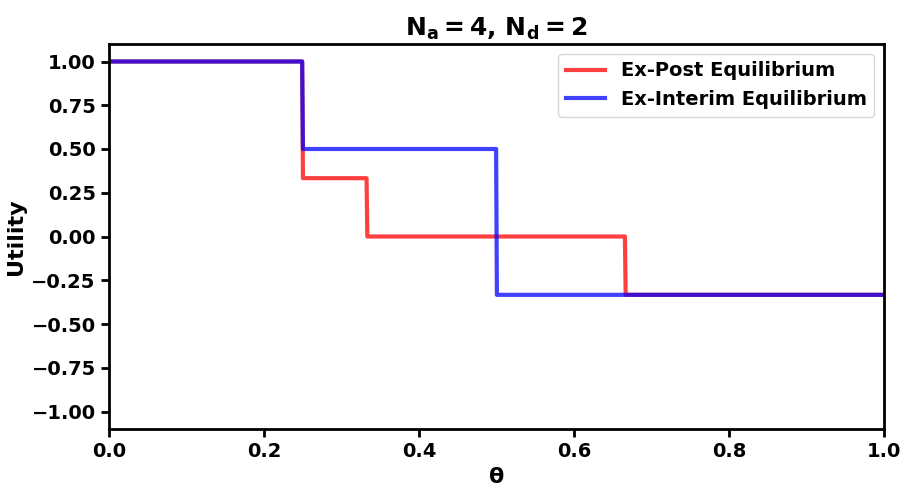}
    \caption{Ex-interim and ex-post equilibrium utilities for $N_a=4$, $N_d=2$.}
    \label{fig:later_4v2}
\end{figure}

This example is interesting because $N_d/N_a=1/2$ lies in the interior of the interval of $\theta$ values, so the preferred information structure flips within the figure. Thus, this example illustrates the preference reversal at the threshold $\theta=N_d/N_a$.

\noindent\textbf{Example 4: $N_a=100$, $N_d=40$.} Fig.~\ref{fig:later_100v40}.

\begin{figure}[H]
    \centering
    \includegraphics[width=\columnwidth]{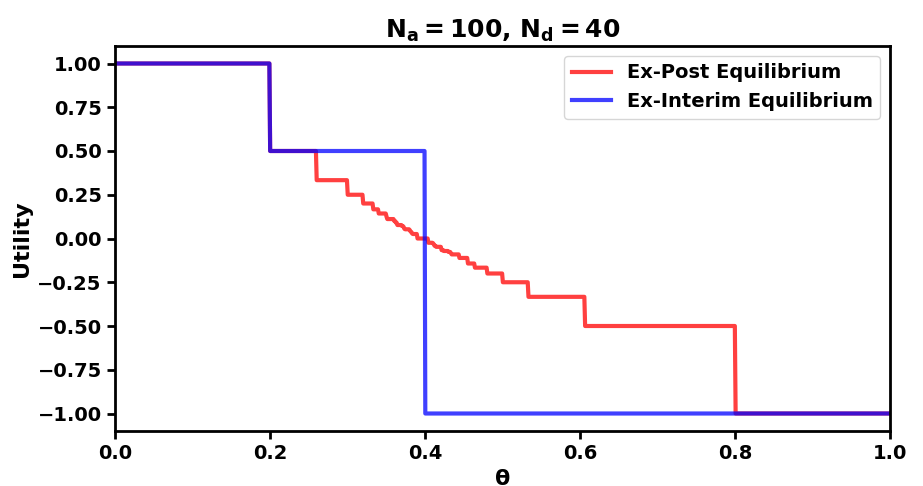}
    \caption{Ex-interim and ex-post equilibrium utilities for $N_a=100$, $N_d=40$.}
    \label{fig:later_100v40}
\end{figure}

Even when the number of troops is higher, the threshold $N_d/N_a=2/5$ still determines which player prefers each information structure. The equilibrium utilities take on a larger set of values, but the same threshold rule still determines the preference ordering.

\noindent\textbf{Example 5: $N_a=100$, $N_d=100$.} Fig.~\ref{fig:later_100v100}.

\begin{figure}[H]
    \centering
    \includegraphics[width=\columnwidth]{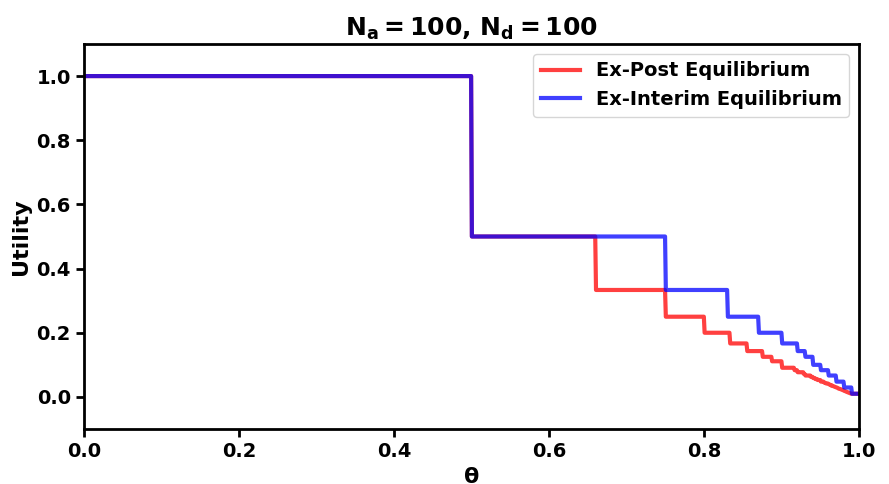}
    \caption{Ex-interim and ex-post equilibrium utilities for $N_a=100$, $N_d=100$.}
    \label{fig:later_100v100}
\end{figure}

Here $N_d/N_a=1$, so the threshold lies at the boundary of the admissible interval $\theta\in[0,1)$. Consequently, the player preferences between the ex-interim and ex-post models do not reverse anywhere in the figure. This example highlights the boundary case where ex-interim information weakly favors the defender throughout $\theta\in[0,1)$.

\section{Conclusion}

In this paper, we studied how information structure affects deterrence in a two-battlefield Colonel Blotto game. We showed that these different structures lead to fundamentally different strategic interactions at the level of the generals: in the ex-ante model, equilibrium may fail to exist, whereas in the ex-post model equilibrium remains well-defined but the induced payoff matrix changes structure. We also showed that none of these information structures uniformly dominates the others, indicating the importance of understanding which structure is being used in real-world applications. In the case of the ex-interim and ex-post models, we showed that the threshold $\theta=N_d/N_a$ determines which information structure is preferable for each player.

The fact that the different models yield such different outcomes, and that information can often be manipulated in real-world settings, suggests that future work should consider how to design information structures in these games in order to achieve desired outcomes. We would also like to explore how these results extend to more than two battlefields, and to other classes of games where information structure may play a critical role in strategic interactions.

\bibliographystyle{IEEEtran}
\bibliography{references}

@article{pita2009using,
  author  = {James Pita and Manish Jain and Fernando Ord{\'o}{\~n}ez and Christopher Portway and Milind Tambe and Craig Western and Praveen Paruchuri and Sarit Kraus},
  title   = {Using Game Theory for {Los Angeles} Airport Security},
  journal = {AI Magazine},
  year    = {2009},
  volume  = {30},
  number  = {1},
  pages   = {43--57},
  doi     = {10.1609/aimag.v30i1.2173}
}

@book{schelling1960strategy,
  author    = {Thomas C. Schelling},
  title     = {The Strategy of Conflict},
  publisher = {Harvard University Press},
  year      = {1960}
}

@book{powell1990nuclear,
  author    = {Robert Powell},
  title     = {Nuclear Deterrence Theory: The Search for Credibility},
  publisher = {Cambridge University Press},
  year      = {1990},
  doi       = {10.1017/CBO9780511551598}
}

@article{borel1921theorie,
  author  = {{\'E}mile Borel},
  title   = {La th{\'e}orie du jeu et les {\'e}quations int{\'e}grales {\`a} noyau sym{\'e}trique},
  journal = {Comptes rendus hebdomadaires des s{\'e}ances de l'Acad{\'e}mie des sciences},
  year    = {1921},
  volume  = {173},
  pages   = {1304--1308}
}

@techreport{gross1950continuous,
  author      = {Oliver Alfred Gross and R. A. Wagner},
  title       = {A Continuous {Colonel Blotto} Game},
  institution = {RAND Corporation},
  year        = {1950},
  number      = {RM-408}
}

@article{roberson2006colonel,
  author  = {Brian Roberson},
  title   = {The {Colonel Blotto} Game},
  journal = {Economic Theory},
  year    = {2006},
  volume  = {29},
  number  = {1},
  pages   = {1--24},
  doi     = {10.1007/s00199-005-0071-5}
}

@article{hart2008discrete,
  author  = {Sergiu Hart},
  title   = {Discrete {Colonel Blotto} and {General Lotto} Games},
  journal = {International Journal of Game Theory},
  year    = {2008},
  volume  = {36},
  number  = {3--4},
  pages   = {441--460},
  doi     = {10.1007/s00182-007-0099-9}
}

@article{kovenock2011multidim,
  author  = {Dan Kovenock and Brian Roberson},
  title   = {A {Blotto} Game with Multi-Dimensional Incomplete Information},
  journal = {Economics Letters},
  year    = {2011},
  volume  = {113},
  number  = {3},
  pages   = {273--275},
  doi     = {10.1016/j.econlet.2011.08.009}
}

@article{ahmadinejad2019duels,
  author  = {AmirMahdi Ahmadinejad and Sina Dehghani and MohammadTaghi Hajiaghayi and Brendan Lucier and Hamid Mahini and Saeed Seddighin},
  title   = {From Duels to Battlefields: Computing Equilibria of {Blotto} and Other Games},
  journal = {Mathematics of Operations Research},
  year    = {2019},
  volume  = {44},
  number  = {4},
  pages   = {1304--1325},
  doi     = {10.1287/moor.2018.0971}
}

@article{shubik1981systems,
  author  = {Martin Shubik and Robert J. Weber},
  title   = {Systems Defense Games: {Colonel Blotto}, Command and Control},
  journal = {Naval Research Logistics Quarterly},
  year    = {1981},
  volume  = {28},
  number  = {2},
  pages   = {281--287},
  doi     = {10.1002/nav.3800280210}
}

@article{kovenock2021generalizations,
  author  = {Dan Kovenock and Brian Roberson},
  title   = {Generalizations of the {General Lotto} and {Colonel Blotto} Games},
  journal = {Economic Theory},
  year    = {2021},
  volume  = {71},
  number  = {3},
  pages   = {997--1032},
  doi     = {10.1007/s00199-020-01272-2}
}

@article{anbarci2023dynamicblotto,
  author  = {Nejat Anbarci and Kutay Cingiz and Mehmet S. Ismail},
  title   = {Proportional Resource Allocation in Dynamic $n$-Player {Blotto} Games},
  journal = {Mathematical Social Sciences},
  year    = {2023},
  volume  = {125},
  pages   = {94--100},
  doi     = {10.1016/j.mathsocsci.2023.07.002}
}

@article{paarporn2025incomplete,
  author  = {Keith Paarporn and Rahul Chandan and Mahnoosh Alizadeh and Jason R. Marden},
  title   = {Incomplete and Asymmetric Information in {General Lotto} Games},
  journal = {IEEE Transactions on Automatic Control},
  year    = {2025},
  volume  = {70},
  number  = {6},
  pages   = {3617--3632},
  month   = jun,
  doi     = {10.1109/TAC.2024.3506786}
}

@article{paarporn2024intentions,
  author  = {Keith Paarporn and Rahul Chandan and Dan Kovenock and Mahnoosh Alizadeh and Jason R. Marden},
  title   = {Strategically Revealing Intentions in {General Lotto} Games},
  journal = {IEEE Transactions on Automatic Control},
  year    = {2024},
  volume  = {69},
  number  = {8},
  pages   = {5396--5407},
  doi     = {10.1109/TAC.2024.3367651}
}

@inproceedings{original_blotto_deterrence,
  author    = {David Grimsman and Keith Paarporn},
  title     = {A {Colonel Blotto} Approach to Deterrence},
  booktitle = {2025 IEEE 64th Conference on Decision and Control (CDC)},
  year      = {2025},
  pages     = {4416--4421},
  address   = {Rio de Janeiro, Brazil},
  month     = dec,
  doi       = {10.1109/CDC57313.2025.11312783}
}

\end{document}